\def\uudot{\dot{u}}
\def\3nab{\tilde{\nabla}}
\def\la {\langle}
\def\ra {\rangle}
\def\be {\begin{equation}}
\def\ee {\end{equation}}
\def\ba {\begin{eqnarray}}
\def\ea {\end{eqnarray}}
\newtheorem{thm}{Theorem}
\newcommand{\bra}[1]{\left(#1\right)}
\newcommand{\brac}[1]{\left\{#1\right\}}
\newcommand{\sfr}[2]{{\textstyle\frac{#1}{#2}}}
\newcommand{\lc}{\varepsilon}
\newcommand{\lb}{\{}
\newcommand{\rb}{\}}
\newcommand{\E}{{\mathcal E}}
\renewcommand{\H}{{\mathcal H}}
\newcommand{\barray}{\begin{array}}
\newcommand{\earray}{\end{array}}
\newcommand{\e}{e}
\newcommand{\N}{N}
\newcommand{\del}{\nabla}
 \newcommand{\nab}{\nabla}
\newcommand \ep {\epsilon}
\newcommand \om {\omega}
\newcommand{\udot}{{\mathcal A}}
\newcommand{\hh}{{\mathcal H}}
\begin{document}

\title{New class of LRS spacetimes with simultaneous rotation and spatial twist}
\author{Sayuri Singh}
\email{sayurisingh22@gmail.com }
\affiliation{Astrophysics and Cosmology Research Unit, School of Mathematics, Statistics and Computer Science, University of KwaZulu-Natal, Private Bag X54001, Durban 4000, South Africa.}
\author{George F. R. Ellis}
 \email{george.ellis@uct.ac.za}
 \affiliation{Department of Mathematics and Applied Mathematics and ACGC, University of Cape Town,
Cape Town, South Africa.}
 \author{Rituparno Goswami}
\email{Goswami@ukzn.ac.za}
\affiliation{Astrophysics and Cosmology Research Unit, School of Mathematics, Statistics and Computer Science, University of KwaZulu-Natal, Private Bag X54001, Durban 4000, South Africa.}
\author{Sunil D. Maharaj}
\email{Maharaj@ukzn.ac.za}
\affiliation{Astrophysics and Cosmology Research Unit, School of Mathematics, Statistics and Computer Science, University of KwaZulu-Natal, Private Bag X54001, Durban 4000, South Africa.}

\begin{abstract}
We establish the existence and find the necessary and sufficient conditions for a new class of solutions of Locally Rotationally Symmetric spacetimes that have non vanishing rotation and spatial twist simultaneously. We transparently show that the existence of such solutions demand non vanishing and bounded heat flux and these solutions are self similar. We provide a brief algorithm indicating how to solve the system of field equations with the given Cauchy data on an initial spacelike Cauchy surface. Finally we argue that these solutions can be used as a first approximation from spherical symmetry to study rotating, inhomogeneous, dynamic and radiating astrophysical stars.
 \end{abstract}
 
\pacs{04.20.-q, 04.40.Dg}
\maketitle
\section{Introduction}

The spacetimes that are {\em Locally Rotationally Symmetric} (LRS) have been studied in detail and discussed many times in the literature in the cosmological context, i.e. with a fluid matter source  (see for example \cite{Ellis_1967,Ellis_1968,Elst_Ellis_1996} and the references therein). For these spacetimes there exists a continuous isotropy group at each point and hence there is a multiply-transitive isometry group acting on the spacetime manifold. As we know, the isotropies around a point in a spacetime with a fluid can occur as a one-dimensional or three dimensional subgroup of the full group of isometries that leaves the normalised 4-velocity of the matter flow invariant. A three dimensional group of isotropies at each point implies  that the spacetime is isotropic at every point and gives rise to the homogeneous and isotropic Friedmann-Lema\^{i}tre-Robertson-Walker (FLRW) models. While a one dimensional group of isotropies at each point  corresponds to anisotropic and in general spatially inhomogeneous models\footnote{generically with one or two centres where the isotropy group is 3-dimensional}, but includes also some spatially homogeneous (Bianchi and Kantowski-Sachs) models \cite{Mac_Ellis_1969, King_Ellis_1973}. The important property of LRS spacetimes is that they exhibit locally (at each point) a unique preferred spatial direction, covariantly defined, (for example, by a vorticity vector field, a non-vanishing non-gravitational acceleration of the matter, or a density gradient). 

LRS spacetimes with a perfect fluid matter source have been completely analysed and classified by Stewart and Ellis using tetrad methods \cite{Ellis_1968}. 
 Using a semi-tetrad covariant formalism it was shown that the Einstein field equations can be written as a set of first order equations of geometrical scalars \cite{Elst_Ellis_1996, Clarkson:2007yp}. By analysing the consistency conditions of the field equations, it was rigorously proved that a perfect fluid LRS spacetime cannot have a simultaneous fluid rotation and spatial twist of the preferred spatial direction. Based on this observation, the perfect fluid LRS spacetimes can be divided into three distinct classes. {\em Class I} spacetimes are those where the rotation is non zero but the twist vanishes. This class was shown to be non-expanding, non-distorting and stationary and the solutions generalise the well known G{\"o}del solution. In {\em Class II} spacetimes both the rotation and the twist vanish and these consist of the spherical, hyper-spherical, and plane symmetric (cylindrical) solutions. {\em Class III} spacetimes have no rotation or acceleration but non-zero twist of the preferred spatial direction. These spacetimes are spatially homogeneous.

Though all these classes are of 
 interest, and LRS-II solutions have been used extensively to study spherically symmetric astrophysical objects, none of them are suitable for modelling a dynamical rotating star (gravitational collapse of a rotating star, for example). For LRS-I, the rotation is non-zero but the spacetime is stationary, while the other two classes allow dynamical solutions with vanishing rotation. In this study the three key questions are: {\em By relaxing the condition of a perfect fluid, that is by introducing pressure anisotropy and heat flux, is it possible to have dynamical solutions with non-zero rotation and non-zero twist? If yes, can these solutions be physical? What are the local geometrical properties  of such solutions?}

In this paper we investigate in detail the above questions by using the semi-tetrad 1+1+2 covariant formalism \cite{Clarkson:2002jz,Betschart:2004uu,Clarkson:2007yp}. We first establish the existence of such solutions and then find the constraints on the thermodynamic quantities of matter that generate such solutions. We also demonstrate that there exist physically realistic solutions where the matter satisfies physically reasonable energy conditions.

The paper is organised as follows: In the next two sections we describe briefly the basic concepts of local semi-tetrad 1+3 and 1+1+2 covariant formalisms. In the subsequent sections we discuss the properties of LRS spacetimes and the field equations written in terms of the 1+1+2 geometrical variables. In section 5, we then proceed to show the existence of dynamic solutions for imperfect fluids (with pressure anisotropy and heat flux) with non-zero rotation and spatial twist. We also investigate transparently the constraints that the thermodynamic quantities of the matter must satisfy for such solutions to exist. We provide a brief algorithm indicating how to solve the system of field equations with the given initial data. Finally we briefly discuss about how these solutions can be used as a first approximation to spherical symmetry in order to study rotating, inhomogeneous and dynamic astrophysical objects.

Unless otherwise specified, we use natural units ($c=8\pi G=1$)  and $(-,+,+,+)$ signature throughout this paper.  
The symbol $\nabla$ represents the usual covariant derivative. The Riemann tensor is defined by
\begin{equation}
R^{a}{}_{bcd}=\Gamma^a{}_{bd,c}-\Gamma^a{}_{bc,d}+ \Gamma^e{}_{bd}\Gamma^a{}_{ce}-\Gamma^e{}_{bc}\Gamma^a{}_{de}\;,
\end{equation}
and the Ricci tensor is obtained by contracting the {\em first} and {\em third} indices
\begin{equation}\label{Ricci}
R_{ab}=g^{cd}R_{cadb}\;.
\end{equation}
The Hilbert--Einstein action in the presence of matter is given by
\begin{equation}
{\cal S}=\frac12\int d^4x \sqrt{-g}\left[R-2\Lambda-2{\cal L}_m \right]\;,
\end{equation}
variation of which gives the Einstein field equations as
\be
G_{ab}+\Lambda g_{ab}=T_{ab}\;
\ee
where $G_{ab} := R_{ab} - \frac{1}{2}R g_{ab}$, $R :=R^a_{\,\,\,a}$, and $\Lambda$ is the cosmological constant.

\section{1+3 decomposition of spacetime}

With respect to a timelike congruence, the spacetime can be locally decomposed into time and space parts. One natural way to define such a timelike congruence would be along the matter flow, with the \textit{four-velocity} defined as
\be
u^a = \frac{dx^a}{d\tau}, \quad \mbox{with} \quad u^a u_a = -1 ,
\ee
where $\tau$ is the proper time. Given the \textit{four-velocity} $u^a$, we have the unique \textit{projection tensors}
\ba
U^a{}_b &=& -u^a u_b ,
\\
h^a{}_b &=& g^a{}_b+u^au_b,
\ea
where $h^a{}_b$ is the projection tensor that projects any 4D vector or tensor onto the local 3-space orthogonal to $u^a$. It follows that
\begin{align*}
U^a{}_c U^c{}_b &= -U^a{}_b &  U^a{}_b u^b &= u^a, & U^a{}_a &=1, \\
h_{ab}u^b &= 0 & h^a{}_c h^c{}_b &= h^a{}_b, & h^a{}_a &=3.
\end{align*}

With the choice of this timelike vector, we have two well defined directional derivatives. We have the vector $u^{a}$ which is used to define the \textit{covariant time derivative} (denoted by a dot) for any tensor $ S^{a..b}{}_{c..d}$, given by 
\be
\dot{S}^{a..b}{}_{c..d}{} = u^{e} \nab_{e} {S}^{a..b}{}_{c..d} 
\ee
and we have the tensor $h_{ab}$ which is used to define the fully orthogonally \textit{projected covariant derivative} $D$ for any tensor $
S^{a..b}{}_{c..d} $: 
\be D_{e}S^{a..b}{}_{c..d}{} = h^a{}_f
h^p{}_c...h^b{}_g h^q{}_d h^r{}_e \nab_{r} {S}^{f..g}{}_{p..q}\;,
\ee 
with total projection on all the free indices. 
The splitting of the spacetime gives a 3-volume element
\be
\ep_{abc}= \eta_{abcd}u^d, \; \mbox{where} \; \ep_{abc} = \ep_{[abc]} \; \mbox{and} \; \ep_{abc}u^c=0.
\ee
Since $\eta_{abcd}$ is the four-dimensional volume element, i.e., $\eta_{abcd}=\sqrt{|\mbox{det}\,g|}\delta^0_{\left[ a
\right. }\delta^1_b\delta^2_c\delta^3_{\left. d \right] }$, we have 
\be
\eta_{abcd} = 2u_{\left[ a \right.}\ep_{\left. bcd\right]}.
\ee
Since $\eta_{abcd}$ is skew-symmetric, the following contractions hold
\ba
\ep_{abc} \ep^{def} &=& 3!h^d{}_{\left[ a \right. }h^e{}_bh^f{}_{\left. c \right] }, \\
\ep_{abc} \ep^{dec} &=& 2h^d{}_{\left[ a \right.}h^e{}_{\left. b \right]}, \\
\ep_{abc} \ep^{dbc} &=& 2h^d{}_a, \\
\ep_{abc} \ep^{abc} &=& 3.
\ea

The covariant derivative of $u^a$ can be decomposed as
\be
\nabla_a u_b = -u_a A_b + D_a u_b,
\ee
where $D_a$ totally projects derivatives onto the 3-space. $D_a u_b$ can be decomposed into the trace part, the trace-free symmetric part and the trace-free anti-symmetric part, i.e.,
\be
\nabla_a u_b = -u_a A_b +\frac13\Theta h_{ab}+\sigma_{ab}+\ep_{abc}\om^{c},
\ee
where $A_b=\dot u_b$ is the acceleration, $\Theta=D_au^a$ represents the expansion of $u_a$, 
$\sigma_{ab}=\bra{h^c{}_{\left( a \right.}h^d{}_{\left. b \right)}-\sfr13 h_{ab} h^{cd}}D_cu_d$ is the shear tensor that denotes the distortion and $\om^{c}$ is the vorticity vector denoting the rotation.

The \textit{Weyl curvature tensor} $C_{abcd}$, which gives the locally free gravitational field, is defined by the equation
\be
C^{ab}{}_{cd} := R^{ab}{}_{cd}-2g^{\left[ a \right.}{}_{\left[c \right.}R^{\left. b\right]}{}_{\left. d\right]}+\sfr13Rg^{\left[ a \right.}{}_{\left[c \right.}g^{\left. b\right]}{}_{\left. d\right]}.
\ee
Since the Weyl tensor is trace-free on all its indices $\bra{C^c{}_{acb}=0}$, the Ricci tensor $R_{ab}$ is the trace of $R_{abcd}$, and $C_{abcd}$ is the trace-free part. The Weyl tensor can be split relative to $u^a$ into the \textit{electric} and \textit{magnetic Weyl curvature} parts as
\begin{eqnarray}
E_{ab} &=& C_{abcd}u^bu^d  \\
&\Rightarrow&  E^a{}_{a} = 0,~~E_{ab} = E_{\la ab \ra},~~E_{ab}u^b = 0,
\end{eqnarray}
and
\begin{eqnarray}
H_{ab} &=& \sfr12\ep_{ade}C^{de}{}_{bc}u^c  
\\
&\Rightarrow&  H^a{}_{a} = 0,~~ H_{ab} = H_{\la ab \ra},~~ H_{ab}u^b = 0.
\end{eqnarray}
The energy momentum tensor of matter can be decomposed similarly as 
\be
T_{ab}=\mu u_au_b+q_au_b+q_bu_a+ph_{ab}+\pi_{ab}\;,
\ee
where $p=(1/3 )h^{ab}T_{ab}$ is the isotropic pressure, $\mu=T_{ab}u^au^b$ is the energy density, $q_a=q_{\la a\ra}=-h^{c}{}_aT_{cd}u^d$ is the 3-vector that defines the heat flux, and $\pi_{ab}=\pi_{\la ab\ra}$ is the anisotropic stress.

\section{1+1+2 decomposition of spacetime}

The 1+1+2  decomposition is a natural extension of 1+3 decomposition, where with respect to a given spatial direction the 3-space is further decomposed, that is we now have another split along a preferred spatial direction. We choose a spacelike vector field $e^a$  such that
\be
u^a e_a =0 \quad \mbox{and} \quad e^a e_a=1.
\ee

Then the new projection tensor is given by
\be \label{N1}N_a^{~b}\equiv
h_a^{~b}-\e_a\e^b=g_{a}^{~b}+u_au^b-\e_a\e^b\, . \ee
This tensor projects vectors onto local 2-spaces, defined as \textit{sheets} (note that these are not subspaces of the 3-space if the twist of $e^a$ is nonzero). Thus
\be\label{N2}
e^a N_{ab}=0=u^a N_{ab}, \quad N^a{}_a=2.
\ee
The volume element of this sheet is
\be \lc_{ab}\equiv\ep_{abc}\e^c = u^d\eta_{dabc}e^c\;\ee
Using the definitions of  $\lc_{ab}$ and $N_{ab}$, we have the following conditions
\ba
 \lc_{ab}\e^b &=&0=\lc_{(ab)}\,,
\\
 \lc_{abc} &=&  e_a \lc_{bc} - e_b  \lc_{ac} + e_c  \lc_{ab}\,,
\\
 \lc_{ab} \lc^{cd} &=& N_a{}^c N_b{}^d - N_s{}^d N_b{}^c ,
\\
 \lc_a{}^c \lc_{bc} &=& N_{ab} ,
\\
 \lc^{ab} \lc_{ab} &=& 2.
\ea
Any 3-vector $\psi^a$ can now be irreducibly split into a scalar, $\Psi$, which
is the vector component parallel to $\e^a$, and a vector,
$\Psi^a$ that lies in the sheet as follows:
\ba 
\psi^a&=&\Psi\e^a+\Psi^{a},~~~\mbox{where}~~~\Psi\equiv \psi_a\e^a\
,\nonumber\\&&~~~\mbox{and}~~~\Psi^{a}\equiv \N^{ab}\psi_b\equiv \psi^{\bar a}\label{psia},
\ea where the bar over the index denotes projection with $\N_{ab}$.
Similarly, the same can be done for any 3-tensor, $\psi_{ab}$,
\be
\psi_{ab}=\psi_{\langle
ab\rangle}=\Psi\bra{\e_a\e_b-\sfr12\N_{ab}}+2\Psi_{(a}\e_{b)}+\Psi_{{ab}}\
, \label{tensor-decomp} 
\ee where
 \ba
\Psi&\equiv &\e^a\e^b\psi_{ab}=-\N^{ab}\psi_{ab}\ ,\nonumber\\
\Psi_a&\equiv &\N_a^{~b}\e^c\psi_{bc}=\Psi_{\bar a}\ ,\nonumber\\
\Psi_{ab}&\equiv &
\bra{\N_{(a}^{~~c}\N_{b)}^{~~d}-\sfr{1}{2}\N_{ab}\N^{cd}}\psi_{cd}
\equiv \Psi_{\lb ab\rb}\ . \label{PSTF-TT}
\ea 
Apart from the `{\it time}' (dot) derivative, we introduce two new derivatives, which for any tensor $ \psi_{a...b}{}^{c...d}  $: 
\ba
\hat{\psi}_{a..b}{}^{c..d} &\equiv & e^{f}D_{f}\psi_{a..b}{}^{c..d}~, 
\\
\delta_f\psi_{a..b}{}^{c..d} &\equiv & N_{a}{}^{p}...N_{b}{}^gN_{h}{}^{c}..
N_{i}{}^{d}N_f{}^jD_j\psi_{p..g}{}^{i..j}\;.
\ea 
The derivative along the $e^a$ vector-field in the surfaces orthogonal to $ u^{a}$ is called the hat-derivative, while the derivative projected onto the sheet is called the $\delta$ -derivative. This projection is on every free index.
 
Using (\ref{psia}) and (\ref{tensor-decomp}), the usual 1+3 kinematical and Weyl quantities can now be split into the irreducible set \[{\cal D}_1 :=  \{\Theta,\udot,\Omega,\Sigma,{\cal E},{\cal H},\udot^a,\Sigma^a,{\cal
E}^a,{\cal H}^a,\Sigma_{ab},{\cal E}_{ab},{\cal H}_{ab}\}. \] The 4-acceleration,vorticity and shear split as
\ba
\uudot^a&=&\udot \e^a+\udot^a,\\
\omega^a&=&\Omega \e^a+\Omega^a,\\
\sigma_{ab}&=&\Sigma\bra{\e_a\e_b-\sfr{1}{2}\N_{ab}}+2\Sigma_{(a}\e_{b)}+\Sigma_{ab}.
\ea
For the electric and magnetic Weyl tensors we get
\ba
E_{ab}&=&{\cal E}\bra{\e_a\e_b-\sfr{1}{2}\N_{ab}}+2{\cal E}_{(a}\e_{b)}+{\cal E}_{ab},\\
H_{ab}&=&{\cal H}\bra{\e_a\e_b-\sfr{1}{2}\N_{ab}}+2{\cal H}_{(a}\e_{b)}+{\cal
H}_{ab}.
\ea
Similarly the fluid variables, $q^a$ and $\pi_{ab}$, may be split as follows
\ba
q^a&=&Q \e^a+Q^a,\\
\pi_{ab}&=&\Pi\bra{\e_a\e_b-\sfr{1}{2}\N_{ab}}+2\Pi_{(a}\e_{b)}+\Pi_{ab}.
\ea

By decomposing the covariant derivative of $e^a$ in the direction orthogonal to $u^a$ into it's irreducible parts, i.e., the spatial derivative of $e^a$, we get
\be 
 {D}_{a}e_{b} = e_{a}a_{b} + \frac{1}{2}\phi N_{ab} + 
\xi\epsilon_{ab} + \zeta_{ab}~, 
\ee
where 
\ba 
a_{a} &\equiv & e^{c}{\rm D}_{c}e_{a} = \hat{e}_{a}~, \\ 
\phi &\equiv & \delta_ae^a~, \\  \xi &\equiv & \frac{1}{2} 
\epsilon^{ab}\delta_{a}e_{b}~, \\ 
\zeta_{ab} &\equiv & \delta_{\lb a}e_{b \rb }~.
\ea
Here, $\phi$ represents the \textit{expansion of the sheet},  $\zeta_{ab}$ is the \textit{shear}, i.e., the distortion of the sheet, $a^{a}$ its \textit{acceleration} and $\xi$ is its (spatial) \textit{vorticity}, i.e., the ``twisting'' or rotation of the sheet. 

\section{LRS spacetime field equations}

As discussed in the Introduction, in LRS spacetime there exists a unique, preferred spatial direction at each point and this preferred direction is covariantly defined.  This direction creates a local axis of symmetry, i.e., all observations are identical under rotations about it. In particular, they are the same in all spatial directions that are perpendicular to that direction. 
Hence the 1+1+2 decomposition described in the previous section is ideally suited for the study of LRS spacetimes. We can immediately see that if we choose the spacelike unit vector $e^a$ along the preferred spatial direction of the spacetime, then by symmetry all the sheet vectors and tensors vanish identically. 
 Thus, all the non-zero 1+1+2 variables are covariantly defined scalars. The geometrical scalar variables that fully describe LRS spacetimes are \[ {\cal D}_2 := \brac{\udot, \Theta,\phi, \xi, \Sigma,\Omega, \E, \H, \mu, p, \Pi, Q }. \] Decomposing the Ricci identities for $u^a$ and $e^a$ and the doubly contracted Bianchi identities, we now get the following field equations for LRS spacetimes:
\smallskip

\textit{Evolution}:
\ba
   \dot\phi &=& \bra{\sfr23\Theta-\Sigma}\bra{\udot-\sfr12\phi}
+2\xi\Omega+Q\ , \label{phidot}
\\ 
\dot\xi &=& \bra{\sfr12\Sigma-\sfr13\Theta}\xi+\bra{\udot-\sfr12\phi}\Omega
\nonumber \\ && +\sfr12 \hh,  \label{xidot}
\\
\dot\Omega &=& \udot\xi+\Omega\bra{\Sigma-\sfr23\Theta}, \label{dotomega}
\\
\dot \hh &=& -3\xi\E+\bra{\sfr32\Sigma-\Theta}\hh+\Omega Q
\nonumber\\ && +\sfr32\xi\Pi.
\ea
\smallskip

\textit{Propagation}:
\ba
\hat\phi  &=&-\sfr12\phi^2+\bra{\sfr13\Theta+\Sigma}\bra{\sfr23\Theta-\Sigma}
    \nonumber\\&&+2\xi^2-\sfr23\bra{\mu+\Lambda}
    -\E -\sfr12\Pi,\,\label{hatphinl}
\\
\hat\xi &=&-\phi\xi+\bra{\sfr13\Theta+\Sigma}\Omega , \label{xihat}
\\
\hat\Sigma-\sfr23\hat\Theta&=&-\sfr32\phi\Sigma-2\xi\Omega-Q\
,\label{Sigthetahat}
 \\
  \hat\Omega &=& \bra{\udot-\phi}\Omega, \label{Omegahat}
\\
\hat\E-\sfr13\hat\mu+\sfr12\hat\Pi&=&
    -\sfr32\phi\bra{\E+\sfr12\Pi}+3\Omega\hh
 \nonumber\\&&   +\bra{\sfr12\Sigma-\sfr13\Theta}Q , \label{Ehatmupi}
\\
\hat \hh &=& -\bra{3\E+\mu+p-\sfr12\Pi}\Omega
\nonumber\\&&-\sfr32\phi \hh-Q\xi,
\ea
\smallskip

\textit{Propagation/evolution}:
\ba
   \hat\udot-\dot\Theta&=&-\bra{\udot+\phi}\udot+\sfr13\Theta^2
    +\sfr32\Sigma^2 \nonumber\\
    &&-2\Omega^2+\sfr12\bra{\mu+3p-2\Lambda}\ ,\label{Raychaudhuri}
\\
    \dot\mu+\hat Q&=&-\Theta\bra{\mu+p}-\bra{\phi+2\udot}Q \nonumber \\
&&- \sfr32\Sigma\Pi,\,
\\    \label{Qhat}
\dot Q+\hat
p+\hat\Pi&=&-\bra{\sfr32\phi+\udot}\Pi-\bra{\sfr43\Theta+\Sigma} Q\nonumber\\
    &&-\bra{\mu+p}\udot\ ,
\ea\\
\ba
\dot\Sigma-\sfr23\hat\udot
&=&
\sfr13\bra{2\udot-\phi}\udot-\bra{\sfr23\Theta+\sfr12\Sigma}\Sigma\nonumber\\
        &&-\sfr23\Omega^2-\E+\sfr12\Pi\, ,\label{Sigthetadot}
\\  
\dot\E +\sfr12\dot\Pi +\sfr13\hat Q&=&
    +\bra{\sfr32\Sigma-\Theta}\E-\sfr12\bra{\mu+p}\Sigma \nonumber \\
  && -\sfr12\bra{\sfr13\Theta+\sfr12\Sigma}\Pi+3\xi\hh \nonumber\\
    &&+\sfr13\bra{\sfr12\phi-2\udot}Q
\label{edot}.
\ea

\textit{Constraint:}
\be
\hh = 3\xi\Sigma-\bra{2\udot-\phi}\Omega. \label{H}
\ee

Also we give the commutation relation for the {\it dot} amd {\it hat} derivatives, for LRS spacetimes: 
\be
\hat {\dot \Psi}-\dot {\hat \Psi} = -\udot \dot\Psi+\bra{\sfr13\Theta+\Sigma} \hat\Psi, \label{psihatdot}
\ee
which holds true for any scalar $\Psi$.
\\

\textbf{Perfect fluids:} Now, if we consider a perfect fluid with $Q=\Pi=0$, the consistency conditions of LRS spacetimes (which are derived from the fact that the propagation equations should be identically evolved in time) demand that \begin{equation}
\Omega\xi=0.
\end{equation} Due to this condition, the spacetime is divided into 3 distinct subclasses \cite{Ellis_1968,Elst_Ellis_1996}: 
\begin{enumerate}
\item\textbf{LRS class I (Rotating solutions):} $\Omega\neq 0$ \\
$e^b$ is hypersurface orthogonal and $u^b$ is twisting. When $\Omega\neq 0$, we see that $\Rightarrow \xi=0=\Theta=\Sigma, \, \dot f=0$. Therefore models with LRS class I solutions can neither expand nor distort. These models are stationary as the dot of all the scalar quantities vanish.
 
\item\textbf{LRS class II (The Inhomogeneous orthogonal family):} $\xi=0=\Omega$ \\
Here, \textit{both} $e^b$ and $u^b$ are hypersurface orthogonal. When $\xi=0=\Omega$, there exist 3-surfaces orthogonal to the fluid flow. All models in this dynamic and spatially inhomogeneous LRS class have vanishing `magnetic part' of the {Weyl curvature tensor}.
 
\item\textbf{LRS class III (Homogeneous orthogonal models with twist):} $\xi\neq 0$ \\
$e^b$ is twisting and $u^b$ is hypersurface orthogonal. When $\xi\neq 0 \Rightarrow \Omega=\phi=\udot=0$, \textit{all} spatial derivatives vanish and all scalars \textit{f} are spatially homogeneous. We see that the spacetimes themselves are orthogonally spatially homogeneous (OSH). 
\end{enumerate}
\section{A new class of LRS with $\Omega, \xi \neq 0$}

As discussed earlier we would now like to relax the perfect fluid condition, that is we introduce pressure anisotropy and heat flux in the matter, and look for existence of solutions that have both rotation and twist of the preferred direction. To do this, first let us derive an important result for LRS spacetimes. 
We can write the full covariant derivatives of the vectors $u^a$ and $e^a$ in terms of the LRS scalars in the following way:
\ba
\del_au_b&=&-\udot u_ae_b+\bra{\sfr13\Theta+\Sigma}e_ae_b \nonumber
\\&&+\bra{\sfr13\Theta-\sfr12\Sigma}N_{ab}+\Omega\lc_{ab}, \label{delaub}
\ea
\ba
\del_ae_b &=& -\udot u_au_b +\bra{\sfr13\Theta+\Sigma}e_au_b \nonumber
\\ && +\sfr12\phi N_{ab}+\xi\lc_{ab}. \label{delaeb}
\ea
Contracting the above with $\lc_{ab}$ and using 
\be
 \lc^{ab}u_a =\lc^{ab}e_a=0; \; \lc^{ab}\lc_{ab}=2, \label{epabua},
\ee
we get  
\be
\Omega = \sfr12\lc^{ab}\del_au_b, \quad \xi = \sfr12\lc^{ab}\del_ae_b . \label{omegaxi}
\ee
Now, for any scalar function `$\Psi$', we have
\be
\del_b \Psi= -\dot\Psi u_b+ \hat\Psi e_b.
\ee
Differentiating again we have
\ba
\del_a \del_b \Psi&=& -\bra{\del_a \dot\Psi}u_b - \dot \Psi\bra{\del_a u_b} \nonumber
\\&&+\bra{\del_a \hat \Psi}e_b+\hat \Psi \bra{\del_a e_b} . \label{deladelbf}
\ea
Contracting with $\lc^{ab}$, and noting that $\del_a\del_b \Psi$ is symmetric in \textit{a} and \textit{b}, we see that the LHS of (\ref{deladelbf}) vanishes. Using equations (\ref{epabua}) and (\ref{omegaxi}) we get an important result:
\be\label{scalarcons}
\forall \Psi, \,\,\, \dot\Psi\Omega = \hat\Psi \xi. 
\ee
This equation implies self-similarity, for it applies to all scalars, and is unchanged under the transformation $t \rightarrow at,$ $r \rightarrow ar,$ where $t$ and $r$ are the curve parameters of the integral curves of $u$ and $e$.\\

 From the above equation it is clear that if $\Omega\ne0, \xi=0$, the dot derivatives of all the scalars vanish, making the spacetime stationary. On the other hand if $\Omega=0, \xi\ne0$, the hat derivatives of all scalars vanish, making the spacetime spatially homogeneous. Thus we arrive to an important result:
 \begin{thm}
 For LRS spacetimes with non-zero rotation and spatial twist, there always exists a conformal Killing vector in the $[u,e]$ plane. When one of these quantities vanish then the conformal Killing vector becomes a Killing vector. This Killing vector is timelike for vanishing spatial twist and it is spacelike for vanishing rotation. However when both the rotation and spatial twist vanish no such symmetry is guaranteed. 
 \end{thm}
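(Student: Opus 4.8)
The plan is to build the symmetry vector explicitly as a combination $\chi^a = \alpha\, u^a + \beta\, e^a$ with scalar coefficients $\alpha,\beta$ to be determined, and to show that the key scalar constraint \reff{scalarcons}, $\dot\Psi\,\Omega = \hat\Psi\,\xi$ for all scalars $\Psi$, forces $\chi^a$ to be (conformal) Killing. First I would compute the Lie derivative, equivalently the symmetrised covariant derivative $\del_{(a}\chi_{b)}$, using the covariant derivative expansions \reff{delaub} and \reff{delaeb} together with $\del_b\alpha = -\dot\alpha\, u_b + \hat\alpha\, e_b$ and likewise for $\beta$. This yields an expression for $\del_{(a}\chi_{b)}$ as a sum of the tensor structures $u_au_b$, $e_{(a}u_{b)}$, $e_ae_b$, $N_{ab}$, with each coefficient a scalar built from $\alpha,\beta$, their dot/hat derivatives, and the LRS scalars $\udot,\Theta,\Sigma,\phi$. (Note $\lc_{ab}$ drops out of the symmetric part since it is antisymmetric.) The natural guess is to take $\alpha$ and $\beta$ proportional to $\xi$ and $\Omega$ respectively, or more precisely to choose them so that \reff{scalarcons} collapses the off-diagonal and trace terms; one expects $\chi^a \propto \xi\, u^a + \Omega\, e^a$ up to an overall scalar, which is exactly the vector whose integral curves generate the $t\to at,\ r\to ar$ scaling already identified after \reff{scalarcons}.

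The central mechanism is this: \reff{scalarcons} says that along the $[u,e]$ plane, the operators $\Omega\,\dot{(\ )}$ and $\xi\,\hat{(\ )}$ agree on every scalar. Applying this to $\alpha$ and $\beta$ themselves, and to the LRS scalars appearing as coefficients, lets me eliminate the would-be obstructions to the conformal Killing equation $\del_{(a}\chi_{b)} = \lambda\, g_{ab}$. Concretely, I would show that with the right choice of $\alpha,\beta$ the $u_au_b$, $e_ae_b$, and $N_{ab}$ coefficients all become equal (call the common value $\lambda$) while the $e_{(a}u_{b)}$ coefficient vanishes — the latter being the equation that \reff{scalarcons} is precisely designed to kill, since it involves the combination $\hat\alpha + \dot\beta$ plus LRS-scalar terms that match up via the self-similarity relation. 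This establishes the conformal Killing property in general.

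Next I would specialise. If $\xi = 0$ but $\Omega \neq 0$: \reff{scalarcons} gives $\dot\Psi = 0$ for all $\Psi$, so all LRS scalars are constant along $u^a$; choosing $\chi^a = e^a$ (or the appropriate normalisation of the general $\chi^a$, which degenerates to a multiple of $e^a$), the coefficients in $\del_{(a}\chi_{b)}$ are built purely from hat-derivatives and LRS scalars and one checks the conformal factor $\lambda$ vanishes identically using \reff{Omegahat}, \reff{Sigthetadot}-type relations — in the stationary case the standard result is that $u^a/\sqrt{-u^cu_c}$-type combination, here the timelike member of the plane, is Killing; I would verify $\del_{(a}\chi_{b)} = 0$ directly. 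The mirror argument with $\Omega = 0,\ \xi \neq 0$ gives $\hat\Psi = 0$, spatial homogeneity, and a spacelike Killing vector $\chi^a \propto u^a$, verified the same way. Finally, when both $\Omega = \xi = 0$, relation \reff{scalarcons} is vacuous, so there is no leverage to constrain $\dot\alpha+\hat\beta$ and the construction produces no symmetry — this is the "no symmetry guaranteed" clause, and it only needs the remark that \reff{scalarcons} becomes $0=0$.

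I expect the main obstacle to be the bookkeeping in the first step: correctly assembling $\del_{(a}\chi_{b)}$ from \reff{delaub}-\reff{delaeb} and the scalar gradients, and then finding the precise $\alpha,\beta$ (and overall normalisation) for which \reff{scalarcons} makes the trace-free parts proportional to $g_{ab}$. The conceptual content is entirely in \reff{scalarcons}; the risk is purely in matching coefficients, in particular keeping track of which LRS evolution/propagation equations (e.g. \reff{Raychaudhuri}, \reff{Sigthetadot}) are needed to show the conformal factor $\lambda$ has the right form and, in the degenerate cases, vanishes. A secondary subtlety worth stating carefully is that the $[u,e]$ plane need not be integrable when both twists are present, so $\chi^a$ genuinely lives in a 2-plane distribution rather than tangent to a family of 2-surfaces; this does not affect the Killing computation, which is purely algebraic in $\del_{(a}\chi_{b)}$, but it should be flagged.
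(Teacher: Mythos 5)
Your strategy --- explicitly building $\chi^a=\alpha\,u^a+\beta\,e^a$ and verifying $\del_{(a}\chi_{b)}=\lambda g_{ab}$ from \reff{delaub}, \reff{delaeb} and \reff{scalarcons} --- is more explicit than what the paper actually does: the paper never writes down a conformal Killing equation, but simply reads off the invariance of \reff{scalarcons} under $t\to at$, $r\to ar$ and, in the degenerate cases, the vanishing of all dot (resp.\ hat) derivatives. Carrying out your computation would therefore be a genuine strengthening. As written, however, the proposal contains a concrete error and leaves the decisive step unexecuted.

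The error: rewrite \reff{scalarcons} as $\bra{\Omega u^a-\xi e^a}\nab_a\Psi=0$. The distinguished direction in the $[u,e]$ plane singled out by the self-similarity relation is therefore $\Omega u^a-\xi e^a$ (up to an overall scalar), not your guess $\xi u^a+\Omega e^a$. Your candidate degenerates to $\Omega e^a$ (spacelike) when $\xi=0$ and to $\xi u^a$ (timelike) when $\Omega=0$ --- precisely the opposite causal characters from those asserted in the theorem. This propagates into your special cases: for $\xi=0$ (the stationary class) you take $\chi^a=e^a$ while simultaneously asserting that ``the timelike member of the plane'' is Killing, and for $\Omega=0$ you announce ``a spacelike Killing vector $\chi^a\propto u^a$'', which is self-contradictory; the Killing direction of the spatially homogeneous class is along $e^a$. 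The two degenerate cases are swapped. Beyond that, the heart of the argument --- choosing $\alpha,\beta$ so that the four coefficient conditions (on $u_au_b$, $u_{(a}e_{b)}$, $e_ae_b$ and $N_{ab}$) collapse to a single conformal factor --- is left as an expectation; it is not obvious that it closes on \reff{scalarcons} alone, since one also needs the field equations \reff{dotomega}, \reff{xihat}, \reff{H} and the constraint \reff{constraint1}, and one must allow an overall scalar normalisation (already in the stationary case $u^a$ itself is not Killing unless $\udot=\Theta=\Sigma=0$; only a suitably rescaled multiple is). Until the candidate vector is corrected and that coefficient matching is actually carried through, the proposal does not establish the theorem.
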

 
 Another important point to be noted here is that $\Omega$ and $\xi$ do not evolve independently. Supposing that both are not equal to zero, first considering $\Psi=\Omega$ in equation   
(\ref{scalarcons}) and then $\Psi=\xi$ and using the field equations (\ref{dotomega}), (\ref{dotomega}), (\ref{dotomega}), (\ref{xihat}) and (\ref{H}) we get the constraint
\be
\frac{\Omega}{\xi}=-\frac{\phi}{\Sigma-\sfr23\Theta} \,.
\label{constraint1}
\ee

Now to establish the existence of solutions with non-zero rotation and spatial twist, we state and prove the following theorem:
\begin{thm}
Evolution of all the independent geometrical scalars of LRS spacetimes that have non-zero rotation and spatial twist, obey a common second order linear hyperbolic partial differential equation and the existence of a initial spacelike Cauchy surface is guaranteed. Subject to the initial Cauchy data on this surface these geometrical scalars can be uniquely determined, and hence unique solutions of the field equations exist.
\end{thm}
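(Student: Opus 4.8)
The plan is to base the whole argument on the scalar identity \reff{scalarcons}, $\dot\Psi\,\Omega=\hat\Psi\,\xi$, which holds for \emph{every} scalar $\Psi$ in the LRS set ${\cal D}_2$ with one common proportionality factor. Since the rotation and twist are assumed non-zero --- and, as already established, the heat flux is non-zero and bounded, so that neither $\Omega$ nor $\xi$ degenerates --- the ratio $k:=\Omega/\xi$ is a single, finite, nowhere-vanishing function, and \reff{scalarcons} reads $\hat\Psi=k\,\dot\Psi$. First I would differentiate this relation along $e^a$: by the Leibniz rule $\hat{\hat\Psi}=\hat k\,\dot\Psi+k\,\hat{\dot\Psi}$, and then re-applying $\hat\Phi=k\,\dot\Phi$ to the scalars $\Phi=k$ and $\Phi=\dot\Psi$ (so that $\hat k=k\dot k$ and $\hat{\dot\Psi}=k\,\ddot\Psi$) gives, for \emph{every} $\Psi\in{\cal D}_2$, one and the same second-order equation
\be
\hat{\hat\Psi}-k^2\,\ddot\Psi-k\,\dot k\,\dot\Psi=0 .
\ee
Because LRS scalars carry no sheet dependence, $\nabla_b\Psi=-\dot\Psi\,u_b+\hat\Psi\,e_b$, this is a genuine partial differential equation in the two variables along $u^a$ and $e^a$; its principal symbol, $(e^ap_a)^2-k^2(u^ap_a)^2$, factorises into two distinct real linear forms precisely because $0<|k|<\infty$, so the equation is linear in $\Psi$ with coefficients $k,\dot k$ common to all scalars, and strictly hyperbolic, its two characteristic directions being $e^a\pm k\,u^a$ --- one of which is the self-similarity direction $e^a-ku^a$ along which, by \reff{scalarcons}, every scalar is constant. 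Combining \reff{scalarcons} with the commutation rule \reff{psihatdot} moreover produces the first-order transport law $\dot k=\udot-(\sfr{1}{3}\Theta+\Sigma)\,k$, which I would use both to rewrite the $\dot\Psi$-coefficient through the basic scalars and to close the otherwise quasi-linear coupling carried by the dynamical coefficient $k$.

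Next I would exhibit the initial surface. A spacelike hypersurface $\mathcal S$ whose conormal lies in the $[u,e]$-plane is non-characteristic for the master equation provided its $[u,e]$-tangent $v^a=\gamma u^a+\delta e^a$ --- necessarily with $\delta^2>\gamma^2$, so that $\mathcal S$ is spacelike --- is transverse to both characteristic directions $e^a\pm ku^a$; since $|k|$ is finite and non-zero such an $\mathcal S$ exists, which is the ``existence of an initial spacelike Cauchy surface'' asserted in the theorem. (That the master cone be globally compatible with the spacetime causal cone --- needed for the development of $\mathcal S$ to be its full domain of dependence rather than just a neighbourhood --- reduces to a definite inequality between $|\Omega|$ and $|\xi|$, equivalently via \reff{constraint1} between $|\phi|$ and $|\Sigma-\sfr{2}{3}\Theta|$, which one checks is enforced by the field equations and the boundedness of $Q$.) On $\mathcal S$ one prescribes, for each scalar, its value and its normal derivative; these are not free, being tied by the constraint \reff{H}, the relation \reff{constraint1} and the propagation equations, so I would single out a minimal consistent set of Cauchy data --- the remaining scalars and their normal derivatives then being fixed along $\mathcal S$ by those algebraic and hat-derivative relations --- and invoke the standard existence and uniqueness theory for linear strictly hyperbolic second-order equations (energy estimates, or reduction to a first-order symmetric hyperbolic system): each geometrical scalar is thereby uniquely determined throughout the domain of dependence of $\mathcal S$.

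The last --- and in my view the principal --- task is to confirm that this second-order reformulation is \emph{faithful} to the original first-order LRS system, i.e. that the constraints propagate. Writing $F_\Psi:=\hat\Psi-k\dot\Psi$, one shows with the help of \reff{psihatdot} and the field equations that $F_\Psi$ itself satisfies a homogeneous linear hyperbolic equation; since the consistent Cauchy data make $F_\Psi$ and $\dot{F}_\Psi$ vanish on $\mathcal S$, one gets $F_\Psi\equiv0$ on the whole development, so \reff{scalarcons} --- hence self-similarity --- is preserved. Running the same argument on the mismatch between the two sides of each evolution and propagation equation \reff{phidot}--\reff{H} shows that the full LRS system is recovered, not merely the derived wave equation. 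I expect the main obstacle to be exactly this constraint-propagation bookkeeping --- heavier than in the perfect-fluid case because $\Pi$ and $Q$ are now dynamical and because $k$ is itself one of the unknowns --- together with the check that the chosen Cauchy data are neither over- nor under-determined. Once faithfulness and the data count are settled, uniqueness of the scalars immediately yields uniqueness of the solution of the field equations, completing the proof.
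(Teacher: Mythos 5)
Your proposal is correct and its core is the same as the paper's: both arguments rest entirely on \reff{scalarcons}, differentiate it once more, invoke the commutator \reff{psihatdot} together with the field equations, and arrive at a single linear second-order hyperbolic equation governing every LRS scalar. Your master equation $\hat{\hat\Psi}-k^2\ddot\Psi-k\dot k\,\dot\Psi=0$ with $k=\Omega/\xi$ is exactly the paper's \reff{waveeqn} divided by $\xi^2$, once one uses $\hat\Psi=k\dot\Psi$ and your transport law $\dot k=\udot-(\sfr13\Theta+\Sigma)k$ (which indeed follows either from \reff{scalarcons} plus \reff{psihatdot}, or from \reff{dotomega}, \reff{xidot} and \reff{H}); so the derivation of the PDE is a repackaging rather than a new route. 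Where you genuinely diverge is the Cauchy-surface step: the paper computes $(\mathcal{L}_{\bm e}N)^{ab}=\phi N^{ab}$ and argues that $e^a$ together with the sheet spans a spacelike 3-surface, whereas you simply exhibit a spacelike, non-characteristic hypersurface transverse to the two characteristic directions $e^a\pm k u^a$. Your version is the more robust one here, since it does not rely on integrability of the distribution orthogonal to $u^a$ (which fails precisely when $\Omega\neq0$), and it makes explicit that the boundedness condition \reff{ConsCon} is what keeps $|k|\neq1$, i.e.\ keeps the characteristics off the light cone. Finally, your third step --- showing that the constraint $F_\Psi=\hat\Psi-k\dot\Psi$ propagates and that the second-order system is faithful to the first-order LRS equations --- has no counterpart in the paper, which implicitly assumes this; it is a worthwhile addition, though as you note it is also the part whose bookkeeping you have only sketched rather than carried out.
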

\begin{proof}
Taking the {\it dot} derivative and then the {\it hat} derivative of the equation (\ref{scalarcons}), subtracting them and then using the commutation relation (\ref{psihatdot}) and the  field equations (\ref{dotomega}), (\ref{dotomega}), (\ref{dotomega}), (\ref{xihat}) and (\ref{H}), we obtain the following equation $\forall \Psi,$
\ba\label{waveeqn}
 -\Omega^2\ddot\Psi+\xi^2\hat{\hat{\Psi}}&-&\dot\Psi\Omega\left[\xi(\udot-\phi)+\Omega(\Sigma-\sfr23\Theta)\right]\nonumber\\
&+&\hat{\Psi}\xi\left[2\Omega\Sigma-\sfr13\Omega\Theta-\phi\xi\right]=0.
\ea
We can easily see that the above equation is a hyperbolic (wave like) second order linear partial differential equation  for $\Omega,\xi\ne0$, that governs the evolution of all independent geometrical scalars that describe a LRS spacetime. By the properties of hyperbolic partial differential equations, there exists a unique solution subject to Cauchy initial data on a spacelike Cauchy surface. To check whether such a 3 dimensional spacelike surface exist, let us consider the Lie derivative of the tensor $N^{ab}$ with respect to the spacelike vector $e^a$. We know that 
\be 
(\mathcal{L}_{\bm{e}}N)^{ab}=e^c\nab_cN^{ab}-N^{cb}\nab_ce^a-N^{ac}\nab_ce^b\;.
\ee
Using (\ref{N1}),(\ref{N2}),(\ref{delaub}),(\ref{delaeb}) and (\ref{epabua}) we see that
\be
(\mathcal{L}_{\bm{e}}N)^{ab}=\phi N^{ab}\;,
\ee
which implies
\be
(\mathcal{L}_{\bm{e}}N)^{ab}u_a=(\mathcal{L}_{\bm{e}}N)^{ab}u_b=0\;,
\ee
that is neither the vector $e^a$ and the tensor $N^{ab}$, nor the Lie derivative of $N^{ab}$ with respect to $e^a$ has any component along the timelike vector $u^a$. 
This clearly shows that the tensor product of $e^a$ and $N^{ab}$ indeed spans
a spacelike 3-surface  where we can specify the Cauchy initial data to obtain a unique solution of (\ref{waveeqn}) for all the independent geometrical and thermodynamic scalars of the LRS spacetime. 
\end{proof}
Furthermore, the hyperbolic nature of the above equation dictates the existence of two families of characteristics. In analogy with the incoming and outgoing waves, these characteristics describe the expanding/collapsing branches of the solutions.

\subsection{Constraints on thermodynamic variables}

We will now describe the constraints on the thermodynamic variables for the energy momentum tensor of the matter field, that generates LRS solutions with non-zero rotation and spatial twist. We first observe that the common wave like equation (\ref{waveeqn}) was obtained by the Ricci identities of the timelike vector $u^a$ and spacelike vector $e^a$. To obtain the constraints on the matter variables, we need to look at the consistencies of doubly contracted Bianchi identities carefully. 
We state and prove the following theorem here:
\begin{thm}
The necessary condition for a LRS spacetime to have non-zero rotation and spatial twist simultaneously is non-zero heat flux which is bounded from both sides.
\end{thm}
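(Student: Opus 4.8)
The plan is to derive the constraint directly from the field equations governing $Q$, $\Pi$, together with the crucial relation \reff{scalarcons} and the constraint \reff{constraint1}. First I would apply \reff{scalarcons} to the scalar $\Psi = \phi$, obtaining $\dot\phi\,\Omega = \hat\phi\,\xi$, and substitute the evolution equation \reff{phidot} for $\dot\phi$ and the propagation equation \reff{hatphinl} for $\hat\phi$. Doing the same with $\Psi = \Sigma - \sfr23\Theta$, using \reff{Sigthetahat} for the hat derivative and \reff{Sigthetadot}, \reff{Raychaudhuri} for the dot derivative, produces a second algebraic relation among the scalars. These relations, combined with the already-established \reff{constraint1}, should be enough to show that $Q$ cannot vanish: if $Q=0$, then \reff{Sigthetahat} forces $\hat\Sigma - \sfr23\hat\Theta = -\sfr32\phi\Sigma - 2\xi\Omega$, and feeding this back through \reff{scalarcons} applied to $\phi$ and to $\Sigma-\sfr23\Theta$ should collapse the system so that either $\Omega=0$ or $\xi=0$, contradicting the hypothesis. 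So the necessity of $Q \neq 0$ comes from a reductio.

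Next I would establish the two-sided bound on $Q$. The natural route is to observe that $Q$ appears quadratically-suppressed relative to the hyperbolic structure: from the wave equation \reff{waveeqn} and the relation \reff{constraint1}, the ratio $\Omega/\xi$ is tied to $\phi/(\Sigma - \sfr23\Theta)$, and the propagation equation \reff{Sigthetahat} expresses $Q$ as a combination $-(\hat\Sigma - \sfr23\hat\Theta) - \sfr32\phi\Sigma - 2\xi\Omega$. I would argue that along the characteristic surfaces identified in the previous theorem, the coefficients of the hyperbolic equation — which involve $\Omega$, $\xi$, $\phi$, $\Sigma$, $\Theta$, $\udot$ — remain finite on any compact portion of the Cauchy development, since they evolve via the same well-posed system. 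This finiteness, pushed through \reff{Sigthetahat}, gives $|Q| \le C$ for some constant depending on the Cauchy data; and the lower bound $|Q| \ge c > 0$ is exactly the non-vanishing established in the first part, made quantitative by noting that $Q = 0$ on an open set would propagate (via the evolution equation \reff{Qhat} and \reff{phidot}) to force $\Omega\xi = 0$ there.

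The hard part will be making the reductio genuinely airtight: when I set $Q=0$ I must be careful that I am not implicitly assuming $\Pi = 0$ as well, since the theorem only claims $Q\neq 0$ is necessary, not that $\Pi$ is constrained. So the argument must track $\Pi$ as a free quantity throughout, using only those field equations — \reff{phidot}, \reff{xidot}, \reff{dotomega}, \reff{hatphinl}, \reff{xihat}, \reff{Sigthetahat}, \reff{Omegahat}, \reff{constraint1}, \reff{H} — in which $Q$ appears but $\Pi$ can be carried along or eliminated. I expect the cleanest path is to take the hat-derivative of \reff{constraint1} and the dot-derivative of \reff{constraint1}, substitute all the evolution and propagation equations, and show the resulting identity is consistent with $\Omega\xi \neq 0$ only if $Q \neq 0$; the $\Pi$-dependence should drop out because $\Pi$ enters \reff{constraint1}'s derivatives only through $\hh$ and $\E$ combinations that are fixed by \reff{H}. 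Verifying that cancellation is the main computational obstacle, but it is a finite check rather than a conceptual one.
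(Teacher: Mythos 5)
There is a genuine gap, and it is concentrated in your second part. The paper's entire proof rests on a single algebraic identity that your plan never reaches: taking the dot derivative of the Bianchi propagation equation \reff{Ehatmupi}, commuting with \reff{psihatdot} and substituting the field equations yields
\begin{equation*}
\Omega\xi\bra{\mu+p+\Pi}+Q\bra{\Omega^2+\xi^2}=0 ,
\end{equation*}
i.e.\ a quadratic in $\Omega/\xi$ with coefficients built from the thermodynamic scalars. From this, $Q\neq 0$ follows because $Q=0$ with $\Omega\xi\neq 0$ would force $\mu+p+\Pi=0$, which is excluded by the energy conditions --- note this means the necessity of $Q\neq0$ is \emph{conditional on the energy conditions}, so your hoped-for unconditional reductio ($Q=0\Rightarrow\Omega\xi=0$) cannot succeed: the identity is perfectly consistent with $Q=0$ and $\Omega\xi\neq0$ when $\mu+p+\Pi=0$. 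Your first part could in principle be repaired along the lines you sketch, but only if you also feed $\Psi=\hh$ into \reff{scalarcons} (the paper does exactly this in its later subsection); with only $\phi$ and $\Sigma-\sfr23\Theta$ you get two relations for three unknowns ($\E$, $p$, $\phi$) and the $Q$--relation does not close.

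The more serious problem is your reading of ``bounded from both sides.'' The theorem's bound is the pointwise algebraic inequality $-\sfr12(\mu+p+\Pi)<Q<\sfr12(\mu+p+\Pi)$, which drops out of the same identity: solving the quadratic for $\Omega/\xi$ gives \reff{constraint2}, and reality of the root requires the discriminant condition $(\mu+p+\Pi)^2>4Q^2$. Your proposed argument --- finiteness of the coefficients of \reff{waveeqn} on compact portions of the Cauchy development, plus a uniform lower bound $|Q|\ge c>0$ --- proves a different and much weaker statement (any continuous solution has bounded coefficients on a compact set), does not produce the specific bound in terms of $\mu+p+\Pi$, and claims a uniform positive lower bound on $|Q|$ that the theorem does not assert and that your propagation argument would not deliver. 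The fix is to abandon the PDE/characteristics route for this theorem entirely and derive the quadratic identity above; both the non-vanishing and the two-sided bound are then immediate consequences of a single line of algebra.
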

\begin{proof}
Taking the \textit{time-like} derivative for the equation (\ref{Ehatmupi}) and using (\ref{psihatdot}) and the field equations, we get
\be
\Omega\xi\bra{\mu+p+\Pi}+Q\bra{\Omega^2+\xi^2}=0. \nonumber
\ee
Simplifying the above equation we get
\be
\sfr{\sfr\Omega\xi}{1+\bra{\sfr\Omega\xi}^2} = \frac{-Q}{\mu+p+\Pi}.\label{omegaxi2}
\ee
From the above equation it is clear that if we demand both $\Omega$ and $\xi$ are well defined and non-zero, and all the energy conditions to be satisfied we must have $Q \neq 0$. Also it is interesting to note the the ratio of the rotation and spatial twist can be described in terms of the thermodynamic quantities only.
 Now using (\ref{omegaxi2}) to solve for $\sfr\Omega\xi$ gives:
\be\label{constraint2}
\frac{\Omega}{\xi} = \frac{-\bra{\mu+p+\Pi}\mp \sqrt{\bra{\mu+p+\Pi}^2-4Q^2}}{2Q}.
\ee
From the above equation it is clear that for the rotation and spatial twist to be well defined, real and non-zero, we must have $(\mu+p+\Pi)^2  > 4Q^2$. Thus the thermodynamic quantities must satisfy the following constraint
\be
-\frac12(\mu+p+\Pi) <Q<\frac12(\mu+p+\Pi)\;;\; Q \neq 0 .\label{ConsCon}
\ee
\end{proof}
From \cite{santos,chan} we can see that the above constraints are consistent with the Dominant energy conditions (DEC) for the matter field. Hence we do have matter that obeys the physically reasonable energy conditions that can generate a LRS spacetime with non-zero rotation and spatial twist. Also the rest of the propagation equations evolve identically in time and give no new constraints.

\subsection{Other constraints and solution finding algorithm}

Let us now try to reduce the number of independent geometrical scalars of an LRS spacetime by using equation (\ref{scalarcons}).
Inserting the scalar variables $\phi$, $\bra{\Sigma-\sfr23\Theta}$ and $\hh$, and using the field equations, we get the following set of equations:
\ba 
\Omega Q-2\xi^3+2\Omega^2\xi-\sfr13\xi\Theta\Sigma+\xi\Sigma^2+\sfr23\xi\mu && \nonumber \\ 
-\sfr29\xi\Theta^2 +\xi \E+\sfr12\xi\Pi+A\phi\xi &=& 0, \\
-\Omega A\phi-\sfr13\Omega\Theta\Sigma+\sfr29\Omega\Theta^2-2\Omega^3-\Omega E &&  \nonumber \\
 +\sfr12\Omega\Pi+\sfr13\Omega\mu+\Omega p + \xi\phi\Sigma+2\Omega\xi^2+Q\xi &=& 0, \\
\sfr92\Omega\xi\Sigma^2-3\Omega\xi\Theta\Sigma+\Omega^2Q+\Omega\xi\Pi+\sfr92\xi^2\phi\Sigma && \nonumber \\
\xi\Omega\mu+\xi\Omega p+Q\xi^2 &=& 0.
\ea
Solving the above system of equations for $\E, p$ and $\phi$, we get 
\ba
p &=& -\frac{\Omega^2Q+\xi\Omega\mu+\Omega\xi\Pi+Q\xi^2}{\Omega\xi}, \\
\phi &=& -\frac{\bra{3\Sigma-2\Theta}\Omega}{3\xi},
\ea
which are same as (\ref{omegaxi2}) and (\ref{constraint1}). Also we get a new algebraic relation for $\E$:
\ba\label{E}
\E &=&  \frac\Omega\xi \udot (\Sigma-\frac23\Theta)-\Sigma^2+\frac13\Theta\Sigma+\frac29\Theta^2\nonumber\\
&&+2(\xi^2-\Omega^2)- \frac\Omega\xi Q -\frac12\Pi-\frac23\mu\;.
\ea

The above equation along with equation (\ref{H}) completely describes the Weyl tensor in LRS spacetimes. 

Now taking into account the results for perfect fluid LRS spacetimes \cite{Elst_Ellis_1996}, we see that the above phenomenon is true for any LRS spacetime and we can state this interesting theorem:

\begin{thm}
The symmetry of LRS spacetimes makes the Weyl tensor obey an algebraic constraint with other 1+1+2 geometrical variables.  Hence the doubly contracted Bianchi identities that describe the propagation and evolution of the Weyl tensor become redundant. 
\end{thm}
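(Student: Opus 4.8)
The plan is to prove the two assertions in turn: first that both Weyl scalars $\E$ and $\hh$ are fixed \emph{algebraically} in terms of the remaining members of ${\cal D}_2$, and then that, once this is known, the evolution and propagation equations coming from the Bianchi identities for the Weyl tensor carry no information beyond the Ricci identities together with these algebraic relations.

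For the magnetic scalar this is immediate, since the constraint (\ref{H}) already writes $\hh$ as a polynomial in $\xi,\Sigma,\udot,\phi,\Omega$. For the electric scalar $\E$ the essential tool is the identity (\ref{scalarcons}), $\dot\Psi\,\Omega=\hat\Psi\,\xi$, which follows purely from the commutativity of second covariant derivatives and therefore holds on \emph{every} LRS spacetime. When $\xi\neq0$ one applies it with $\Psi=\phi$ to replace $\hat\phi$ in the propagation equation (\ref{hatphinl}) by $(\Omega/\xi)\dot\phi$, and then substitutes the evolution equation (\ref{phidot}) for $\dot\phi$; this is exactly how the algebraic relation (\ref{E}) arises. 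When $\Omega\neq0,\ \xi=0$ (LRS class I) the same manoeuvre, combined with $\Theta=\Sigma=0$ and the stationarity $\dot f=0$, reduces (\ref{hatphinl}) and its companions to an algebraic statement for $\E$; and for the doubly hypersurface--orthogonal families ($\Omega=\xi=0$, in particular LRS class II) the corresponding algebraic relation is the one already recorded for perfect fluids in \cite{Elst_Ellis_1996}. In every case, therefore, one has $\E=F$ and $\hh=G$ for explicit polynomials $F,G$ in ${\cal D}_2\setminus\{\E,\hh\}$.

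Granting this, the redundancy of the Weyl--Bianchi sector follows. Those identities, decomposed in $1+1+2$ variables, are precisely the propagation equation (\ref{Ehatmupi}) and the companion $\hat\hh$ equation, together with the evolution equation (\ref{edot}) and the companion $\dot\hh$ equation. I would apply the hat and dot operators to the algebraic relations $\E=F$ and $\hh=G$, which expresses $\hat\E,\dot\E,\hat\hh,\dot\hh$ in terms of derivatives of the remaining scalars; these derivatives are then eliminated using the other field equations (the kinematic equations for $u^a$ and $e^a$ and the matter conservation equations) together with the commutation relation (\ref{psihatdot}) wherever mixed derivatives must be reordered. It then remains to check that the expressions so obtained coincide \emph{identically} with the right-hand sides of (\ref{Ehatmupi}), (\ref{edot}) and the two $\hh$ equations --- establishing that those four equations add nothing to the system and are therefore redundant, which is also the content of the remark that the remaining propagation equations ``evolve identically in time and give no new constraints''.

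The main obstacle is exactly this consistency check: it is mechanical but lengthy, and care is needed to keep it non-circular, because (\ref{E}) was itself derived using (\ref{scalarcons}) and the constraint (\ref{constraint1}) (equivalently (\ref{omegaxi2})), so when differentiating it one must track which relations serve as input (the Ricci identities for $u^a$ and $e^a$ and the commutator (\ref{psihatdot})) and which are being recovered as output (the Bianchi equations for $\E$ and $\hh$). A cleaner way to present the conclusion, which I would add as a remark, is a counting argument: the Ricci identities supplemented by the algebraic expressions for $\E$ and $\hh$ already close the system and determine the evolution of all of ${\cal D}_2$ from data on the spacelike Cauchy surface constructed above; since the Bianchi system must share the same solution set, it cannot be independent of that closed subsystem. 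Finally, I would remark that this statement generalises the known perfect-fluid result of \cite{Elst_Ellis_1996} to the imperfect-fluid LRS spacetimes with $\Omega,\xi\neq0$ established here.
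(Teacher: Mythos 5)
Your proposal is correct and follows essentially the same route as the paper: the magnetic part is algebraic by the constraint (\ref{H}), and the electric part is obtained by applying the identity (\ref{scalarcons}) to $\phi$ and eliminating $\hat\phi$ and $\dot\phi$ via (\ref{hatphinl}) and (\ref{phidot}), which (together with (\ref{constraint1})) reproduces the paper's relation (\ref{E}) exactly. The only difference is one of completeness: you spell out the consistency check and the degenerate cases $\xi=0$ and $\Omega=\xi=0$ that the paper disposes of by citing the perfect-fluid results of \cite{Elst_Ellis_1996} and by the remark that the remaining equations ``evolve identically in time and give no new constraints.''
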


Now we can see that the number of independent geometrical 1+1+2 scalars that describe a LRS spacetime is reduced considerably. For example, specifying \[ {\cal D}_3 :=\brac{\udot, \Theta, \xi, \Sigma, \mu, p(\mu,\Pi,Q), \Pi, Q }, \] will automatically specify \[ {\cal D}_4 :=\brac{\Omega, \phi, \E, \H} ,\] via the constraint equations (\ref{constraint1},\ref{constraint2},\ref{E},\ref{H}). Hence, we can give the initial Cauchy data on any spacelike Cauchy surface for the independent variables using any suitable chosen equation of state $p(\mu, \Pi, Q)$, and then determine their evolution via equation (\ref{waveeqn}), which applies equally to all the variables in ${\cal D}_3$. This will then provide us with a unique self-similar dynamical solution for the LRS spacetimes with non-zero rotation and spatial twist. The nature of the matter required for such solutions to exist follows from Theorem 2, where there is a non trivial condition on the presence of heat flux. There are no other conditions on the density, pressure or pressure anisotropy, but for a physically realistic solutions these must obey the Dominant energy conditions (DEC).

As described in detail in \cite{Ellis_1968}, if a spacetime exhibit local rotational symmetry in an open neighbourhood of a point $P$, then the coordinate freedoms can be used to describe the local metric in the neighbourhood in $(t,r,x,y)$ coordinates in the following way:
\ba\label{metric}
ds^2&=&-F^2(t,r)dt^2+X^2(t,r)dr^2\nonumber\\
&&+Y^2(t,r)[dx^2+D(x)dy^2]\nonumber\\
&&+g(x)F^2(t,r)[2dt-g(x)dy]dy\nonumber\\
&&-h(x)X^2(t,r)[2dr-h(x)dy]dy
\ea
We can immediately see that $g(x)=h(x)=0$ and $D(x)=\sin^2x$ gives a general spherically symmetric metric which is of LRS class II. However we have already established that LRS spacetimes with non-vanishing rotation and spatial twist must be self similar. Hence the functions $F$, $X$ and $Y$ can be written in terms of a single variable $z\equiv t/r$. Hence only self similar spherically symmetric solutions can be obtained in the limit $g(x)\rightarrow 0, h(x)\rightarrow 0$ or equivalently $\Omega\rightarrow 0,\xi\rightarrow 0$.
Therefore to study the interior of a rotating, radiating and inhomogeneous star as a first approximation from the spherical symmetry, we can start with a self similar spherically symmetric spacetime and add sufficiently small $g(x)$ and $h(x)$, with respect to some covariant scale in the problem (the Misner Sharp mass of the spherical star for example) and solve the field equations with the matter source that obeys all the restrictions as imposed by Theorem 2 and the energy conditions.

\section{Discussion}

In this paper we transparently showed that it is possible to have a Locally Rotationally Symmetric spacetime with non-zero rotation and spatial twist simultaneously if we allow for non-zero and bounded heat flux. We investigated in detail all the covariant geometrical properties of such spacetimes and proved an interesting result that evolution of all the covariant scalars obey a single common hyperbolic linear second order partial differential equation. The existence of spacelike Cauchy surface, where initial Cauchy data can be provided is guaranteed. It was also shown that these solutions are self similar as they possess a conformal Killing vector in the $[u,e]$ plane.\\

As these solutions are neither stationary nor spatially homogeneous in general, with suitable equations of state, perhaps with the temperature $T$ as an internal variable in the equations of state for $P$, $\Pi$, and $Q$, they have the potential to give exact general relativistic models for rotating and dynamic and radiating stellar structures as they definitely have non zero heat flux in the interior. These solutions will then provide a relativistic description of a rotating stellar interior 
with quadrupole and other higher multipole moments and this may account for physical features of stars that cannot be explained by Newtonian dynamics.

 \begin{acknowledgments}
SS. GFRE and RG are supported by National Research Foundation (NRF), South Africa. SDM 
acknowledges that this work is based on research supported by the South African Research Chair Initiative of the Department of
Science and Technology and the National Research Foundation.
\end{acknowledgments}

\end{document}